\theoremstyle{plain}
\newtheorem{thm}{Theorem}
\newtheorem{lem}[thm]{Lemma}
\newtheorem{cor}[thm]{Corollary}
\newenvironment{exam}[1]
{\begin{flushleft}\textbf{Example #1}.\enspace}%
{\end{flushleft}}
\newcommand{\complex}{{\mathbb C}}
\newcommand{\real}{{\mathbb R}}
\newcommand{\rmtr}{\mathrm{tr\,}}
\newcommand{\ab}[1]{\left|#1\right|}
\newcommand{\doubleab}[1]{\left|\left|#1\right|\right|}
\newcommand{\brac}[1]{\left\{#1\right\}}
\newcommand{\paren}[1]{\left(#1\right)}
\newcommand{\sqbrac}[1]{\left[#1\right]}
\newcommand{\elbows}[1]{{\left\langle#1\right\rangle}}
\newcommand{\ket}[1]{{\left|#1\right>}}
\newcommand{\bra}[1]{{\left<#1\right|}}
\begin{document}

\title{A CHARACTERIZATION FOR\\
ENTANGLED VECTORS}
\author{Stan Gudder\\ Department of Mathematics\\
University of Denver\\ Denver, Colorado 80208\\
sgudder@du.edu}
\date{}
\maketitle

\begin{abstract}
This article presents a simple characterization for entangled vectors in a finite dimensional Hilbert space $H$. The characterization is in terms of the coefficients of an expansion of the vector relative to an orthonormal basis for
$H$. This simple necessary and sufficient condition contains a restriction and we also present a more complicated condition that is completely general. Although these characterizations apply to bipartite systems, we show that generalizations to multipartite systems are also valid.
\end{abstract}

Let $H_1$ and $H_2$ be finite-dimensional complex Hilbert spaces with $\dim H_1=m$, $\dim H_2=n$ and let $H=H_1\otimes H_2$. A vector $\psi\in H$ is \textit{factorized} if there are vectors $\alpha\in H_1$,
$\beta\in H_2$ with $\psi =\alpha\otimes\beta$ \cite{hz12,nc00}. If $\psi$ is not factorized, then $\psi$ is \textit{entangled} \cite{bus03,hhhh09,zb06}. When $\psi =\alpha\otimes\beta$ we call $\alpha$ and $\beta$ the \textit{local parts} of $\psi$. Our first result shows that the local parts are unique to within complex multiples. If the vectors are states (unit vectors) then the multiples have absolute have absolute value one (phase factors).

\begin{lem}    
\label{lem1}
If $\alpha ,\beta\ne 0$, then $\alpha\otimes\beta =\alpha '\otimes\beta '$ if and only if there exists a nonzero
$a\in\complex$ such that $\alpha '=a\alpha$ and $\beta '=\tfrac{1}{a}\,\beta$.
\end{lem}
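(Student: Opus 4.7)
The backward implication is immediate from bilinearity of the tensor product: if $\alpha'=a\alpha$ and $\beta'=a^{-1}\beta$ then $\alpha'\otimes\beta'=(a\alpha)\otimes(a^{-1}\beta)=aa^{-1}(\alpha\otimes\beta)=\alpha\otimes\beta$. So the substance lies in the forward direction. Assume $\alpha\otimes\beta=\alpha'\otimes\beta'$ with $\alpha,\beta\ne0$; since the left side is nonzero, we must also have $\alpha',\beta'\ne0$.

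My plan is to peel off the second factor via a ``partial inner product.'' Define a linear map $T\colon H_1\otimes H_2\to H_1$ by extending $T(u\otimes v)=\elbows{\beta,v}u$ linearly (this is well-defined by the universal property of the tensor product, or concretely by expanding in a chosen basis). Applying $T$ to both sides of $\alpha\otimes\beta=\alpha'\otimes\beta'$ gives
$$\doubleab{\beta}^{2}\alpha=\elbows{\beta,\beta'}\alpha'.$$
Since $\alpha\ne0$ and $\doubleab{\beta}^{2}\ne0$, the scalar $\elbows{\beta,\beta'}$ must be nonzero, so setting $a=\doubleab{\beta}^{2}/\elbows{\beta,\beta'}\in\complex\setminus\brac{0}$ yields $\alpha'=a\alpha$.

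Substituting this back into $\alpha\otimes\beta=\alpha'\otimes\beta'$ and using bilinearity produces $\alpha\otimes(\beta-a\beta')=0$. Because $\alpha\ne0$, I want to conclude $\beta-a\beta'=0$, i.e.\ $\beta'=a^{-1}\beta$. The only step requiring any care is this cancellation: if $\alpha\ne0$ and $\alpha\otimes\xi=0$, then $\xi=0$. This is a short check: choose an ONB $\brac{f_j}$ of $H_2$, write $\xi=\sum_j\xi_j f_j$, so $\alpha\otimes\xi=\sum_j\xi_j(\alpha\otimes f_j)$; pairing with $\alpha\otimes f_k$ gives $\doubleab{\alpha}^{2}\xi_k=0$, hence every $\xi_k$ vanishes. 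No other obstacle is anticipated; the rest is routine linear algebra.
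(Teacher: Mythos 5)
Your proof is correct, and it reaches the same destination by a somewhat more elementary route than the paper. The paper peels off the second factor by forming the rank-one operators $\ket{\alpha}\bra{\alpha}\otimes\ket{\beta}\bra{\beta}=\ket{\alpha'}\bra{\alpha'}\otimes\ket{\beta'}\bra{\beta'}$, taking the partial trace over $H_2$, and then letting the resulting operators act on $\alpha$ to get $\doubleab{\alpha}^2\doubleab{\beta}^2\alpha=\doubleab{\beta'}^2\elbows{\alpha',\alpha}\alpha'$; you instead contract the vector equation directly with $\bra{\beta}$ on the second slot, getting $\doubleab{\beta}^2\alpha=\elbows{\beta,\beta'}\alpha'$ without ever invoking density operators or the partial trace (just make sure your map $T(u\otimes v)=\elbows{\beta,v}u$ is built with the inner product linear in its second argument, as in the paper's Dirac conventions, or fall back on your basis definition so that $T$ is genuinely linear). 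The two proofs also diverge in the second half: the paper runs the symmetric argument to get $\beta'=b\beta$ and then reads off $b=1/a$ from $\alpha\otimes\beta=ab\,\alpha\otimes\beta$, implicitly using that $\alpha\otimes\beta\ne 0$; you substitute $\alpha'=a\alpha$ back and invoke the cancellation $\alpha\otimes\xi=0\Rightarrow\xi=0$ (which you prove), so the relation $\beta'=\tfrac{1}{a}\beta$ falls out in one step rather than up to a scalar to be determined afterwards. Your version makes explicit two small points the paper leaves tacit, namely the nonvanishing of the inner product appearing in the denominator of $a$ and the cancellation property of simple tensors, which is a modest gain in completeness; the paper's partial-trace argument, in exchange, is the computation a quantum-information reader already knows and reuses the reduced-state formalism from the references.
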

\begin{proof}
Clearly, if $\alpha '=a\alpha$ and $\beta '=\tfrac{1}{a}\,\beta$, then $\alpha\otimes\beta=\alpha '\otimes\beta '$. Conversely, suppose that $\alpha\otimes\beta =\alpha '\otimes\beta '$. Then employing Dirac notation, we have
\begin{align*}
\ket{\alpha}\bra{\alpha}\otimes\ket{\beta}\bra{\beta}&=\ket{\alpha\otimes\beta}\bra{\alpha\otimes\beta}
   =\ket{\alpha '\otimes\beta '}\bra{\alpha '\otimes\beta '}\\
   &=\ket{\alpha '}\bra{\alpha '}\otimes\ket{\beta '}\bra{\beta '}
\end{align*}
Taking the partial trace over $H_2$ \cite{hz12,nc00} gives
\begin{align*}
\doubleab{\beta}^2\paren{\ket{\alpha}\bra{\alpha}}
   &=\paren{\ket{\alpha}\bra{\alpha}}\rmtr\paren{\ket{\beta}\bra{\beta}}
   =\paren{\ket{\alpha '}\bra{\alpha '}}\rmtr\paren{\ket{\beta'}\bra{\beta'}}\\
   &=\doubleab{\beta '}^2\paren{\ket{\alpha '}\bra{\alpha '}}
\end{align*}
Acting on $\alpha$ gives
\begin{equation*}
\doubleab{\alpha}^2\doubleab{\beta}^2\alpha =\doubleab{\beta '}^2\elbows{\alpha ',\alpha}\alpha '
\end{equation*}
We conclude that $\alpha '=a\alpha$ with
\begin{equation*}
a=\frac{\doubleab{\alpha}^2\doubleab{\beta}^2}{\doubleab{\beta '}^2\elbows{\alpha ',\alpha}}
\end{equation*}
In a similar way, there is a $b\in\complex$ with $\beta '=b\beta$. Hence,
\begin{equation*}
\alpha\otimes\beta =\alpha '\otimes\beta '= ab\alpha\otimes\beta
\end{equation*}
so that $b=1/a$.
\end{proof}

Let $\brac{\phi _1,\ldots ,\phi _m}$, $\brac{\psi _1,\ldots ,\psi _n}$ be orthonormal bases for $H_1$ and $H_2$, respectively, and let $\psi=\sum c_{ij}\phi _i\otimes\psi _j\in H_1\otimes H_2$. Can we tell, by examining the coefficients $c_{ij}$, whether $\psi$ is entangled or not?

\begin{exam}{1}  
Let
\begin{align*}
\psi &=4\phi _1\otimes\psi _1-3i\phi _1\otimes\psi _2+5\phi _1\otimes\psi _3-8\phi _2\otimes\psi _1
  +6i\phi _2\otimes\psi _2\\
  &\quad -10\phi _2\otimes\psi _3+12\phi _3\otimes\psi _1-9i\phi _3\otimes\psi _2+15\phi _3\otimes\psi _3
\end{align*}
Can we use the coefficients to determine if $\psi$ is factorized and if it is, do they give the local parts? The next theorem shows that the answer is yes.\hfill\qedsymbol
\end{exam}

\begin{thm}    
\label{thm2}
Let $\psi =\sum c_{ij}\phi _i\otimes\psi _j$ and suppose that $\sum c_{ij}\ne 0$. Then $\psi$ is factorized if and only if for all $i=1,\ldots ,m$, $j=1,\ldots ,n$ we have
\begin{equation}                
\label{eq1}
c_{ij}\sum _{i,j}c_{ij}=\sum _jc_{ij}\sum _ic_{ij}
\end{equation}
\end{thm}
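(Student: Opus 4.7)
The plan is to prove both directions by a direct calculation using the expansion coefficients of the local parts.

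For the forward direction, I will assume $\psi=\alpha\otimes\beta$ and expand $\alpha=\sum_i a_i\phi_i$, $\beta=\sum_j b_j\psi_j$, so that $c_{ij}=a_ib_j$. Writing $A=\sum_i a_i$ and $B=\sum_j b_j$, the three sums in \eqref{eq1} become $\sum_{i,j}c_{ij}=AB$, $\sum_j c_{ij}=a_iB$, and $\sum_i c_{ij}=Ab_j$. Both sides of \eqref{eq1} then evaluate to $a_ib_jAB$, so the identity holds.

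For the converse, let $S=\sum_{i,j}c_{ij}$, $R_i=\sum_j c_{ij}$, $C_j=\sum_i c_{ij}$. The hypothesis $S\ne 0$ lets me rewrite \eqref{eq1} as
\begin{equation*}
c_{ij}=\frac{R_iC_j}{S}.
\end{equation*}
Setting $a_i=R_i/S$ and $b_j=C_j$, I get $c_{ij}=a_ib_j$ for every $i,j$. Defining $\alpha=\sum_i a_i\phi_i\in H_1$ and $\beta=\sum_j b_j\psi_j\in H_2$, I obtain
\begin{equation*}
\alpha\otimes\beta=\sum_{i,j}a_ib_j\,\phi_i\otimes\psi_j=\sum_{i,j}c_{ij}\,\phi_i\otimes\psi_j=\psi,
\end{equation*}
so $\psi$ is factorized.

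There is really no serious obstacle in the argument; the only point requiring attention is that the division by $S$ in the converse direction is exactly what the assumption $\sum c_{ij}\ne 0$ is there to permit. It is worth remarking that Lemma~\ref{lem1} explains why a splitting of $c_{ij}=a_ib_j$ into the chosen factors $R_i/S$ and $C_j$ is only one of infinitely many valid choices, all differing by the rescaling $a_i\mapsto ka_i$, $b_j\mapsto b_j/k$ guaranteed by the lemma.
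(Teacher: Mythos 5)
Your proof is correct and follows essentially the same route as the paper: the forward direction is the same direct computation showing both sides of \eqref{eq1} equal $a_ib_j\bigl(\sum_i a_i\bigr)\bigl(\sum_j b_j\bigr)$, and the converse uses exactly the paper's choice $a_i=\tfrac{1}{S}\sum_j c_{ij}$, $b_j=\sum_i c_{ij}$ (with $S=\sum_{i,j}c_{ij}\ne 0$) to exhibit $c_{ij}=a_ib_j$. No gaps; your closing remark about the rescaling freedom matches the paper's observation after the theorem via Lemma~\ref{lem1}.
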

\begin{proof}
We have that $\psi$ is factorized if and only if $\psi =\alpha\otimes\beta$ for some $\alpha\in H_1$,
$\beta\in H_2$. Let $\alpha =\sum a_i\phi _i$ and $\beta =\sum b_j\psi _j$. It follows that
\begin{equation*}
\sum _{i,j}c_{ij}\phi _i\otimes\psi _j=\paren{\sum a_i\phi _i}\otimes\paren{\sum b_j\psi _j}
  =\sum _{i,j}a_ib_j\phi _i\otimes\psi _j
\end{equation*}
Hence, $\psi$ is factorized if and only if there exist sequences of complex numbers $\brac{a_i}$, $\brac{b_j}$, $i=1,\ldots ,m$, $j=1,\ldots ,n$ such that $c_{ij}=a_ib_j$. If \eqref{eq1} holds, letting $c=\sum _{i,j}c_{ij}$,
$a_i=\tfrac{1}{c}\,\sum _jc_{ij}$, $b_j=\sum _ic_{ij}$ we have that $c_{ij}=a_ib_j$ so $\psi$ is factorized. Conversely, suppose that $\psi$ is factorized and hence there exist sequences $\brac{a_i}$, $\brac{b_j}$ with $c_{ij}=a_ib_j$. Then $\sum _jc_{ij}\sum _ic_{ij}=a_ib_j\sum _{i,j}a_ib_j=c_{ij}\sum _{i,j}c_{ij}$. Hence,
\begin{equation*}
\sum _jc_{ij}\sum _ic_{ij}=a_ib_j\sum _{i,j}a_ib_j=c_{ij}\sum _{i,j}c_{ij}
\end{equation*}
so \eqref{eq1} holds.
\end{proof}

Notice that when $\phi$ is factorized as $\phi =\alpha\otimes\beta$, then Theorem~\ref{thm2} gives the local parts
$\alpha =\sum a_i\phi _i$, $\beta =\sum b_j\psi _j$ where $a_i=\tfrac{1}{c}\,\sum _jc_{ij}$,
$b_j=\sum _ic_{ij}$. Of course, $\alpha$ and $\beta$ have the flexibility given by Lemma~\ref{lem1}. This can be used to normalize $\alpha$ and $\beta$ when $\psi$ is a vector state. Although Theorem~\ref{thm2} requires that $\sum c_{ij}\ne 0$, we do have the following result.

\begin{cor}    
\label{cor3}
For $\psi =\sum c_{ij}\phi _i\otimes\psi _j$, if $\sum c_{ij}=0$ and $\sum _jc_{ij}\sum _ic_{ij}\ne 0$, then $\psi$ is entangled.
\end{cor}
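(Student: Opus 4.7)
The plan is to argue by contrapositive: I assume $\psi$ is factorized and show that the hypothesis $\sum_{i,j} c_{ij}=0$ forces $\paren{\sum_j c_{ij}}\paren{\sum_i c_{ij}}=0$ for every pair $(i,j)$, contradicting the second assumption of the corollary.

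Concretely, if $\psi$ is factorized then by the computation in the proof of Theorem~\ref{thm2} there exist scalars $\brac{a_i}$, $\brac{b_j}$ with $c_{ij}=a_ib_j$. The key observation is the factorization of the partial sums:
\begin{equation*}
\sum _{i,j}c_{ij}=\paren{\sum _i a_i}\paren{\sum _j b_j},\qquad
\sum _j c_{ij}=a_i\sum _j b_j,\qquad
\sum _i c_{ij}=b_j\sum _i a_i.
\end{equation*}
The assumption $\sum _{i,j}c_{ij}=0$ therefore implies that at least one of $\sum _i a_i$ or $\sum _j b_j$ vanishes. In either case, every product
\begin{equation*}
\paren{\sum _j c_{ij}}\paren{\sum _i c_{ij}}=a_ib_j\paren{\sum _i a_i}\paren{\sum _j b_j}
\end{equation*}
is zero for all $i,j$, which contradicts the hypothesis that this product is nonzero for some $i,j$. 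Hence $\psi$ cannot be factorized, i.e.\ it is entangled.

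There is essentially no obstacle here; the argument is a short contrapositive using only the factorization $c_{ij}=a_ib_j$ from Theorem~\ref{thm2}. The only minor point worth noting is that the hypothesis $\sum _j c_{ij}\sum _i c_{ij}\ne 0$ should be read as asserting that this quantity is nonzero for at least one choice of $(i,j)$; under the factorization assumption it would have to be zero for every $(i,j)$, which is the precise contradiction driving the proof.
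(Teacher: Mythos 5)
Your proof is correct and follows exactly the argument the paper intends: the corollary is stated without proof, and the natural derivation from the factorization $c_{ij}=a_ib_j$ in the proof of Theorem~\ref{thm2} is precisely your contrapositive, showing $\sum_j c_{ij}\sum_i c_{ij}=a_ib_j\paren{\sum_i a_i}\paren{\sum_j b_j}=0$ whenever $\sum_{i,j}c_{ij}=0$. Your reading of the hypothesis (nonzero for at least one pair $(i,j)$) is also the right one.
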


The only case not included in Theorem~\ref{thm2} and Corollary~\ref{cor3} is when
\begin{equation}                
\label{eq2}
\sum c_{ij}=\sum _jc_{ij}\sum _ic_{ij}=0
\end{equation}

We now show that when this happens, then no conclusion can be drawn.

\begin{exam}{2}  
Let $\psi =\phi _1\otimes\psi _1-\phi _2\otimes\psi _1-\phi _1\otimes\psi _2+\phi _2\otimes\psi _2$. Then \eqref{eq2} holds and $\psi$ is factorized because
\begin{equation*}
\psi =(\phi _1-\phi _2)\otimes (\psi _1-\psi _2)
\end{equation*}
On the other hand, let
\begin{equation*}
\psi '=\phi _1\otimes\psi _1-\phi _1\otimes\psi _2+\phi _2\otimes\psi _3-\phi _2\otimes\psi _4
\end{equation*}
Again, \eqref{eq2} holds, but $\psi '$ is entangled. To show this, replace $\psi _2$ by $-\psi _2$ to get
\begin{equation*}
\psi '=\phi _1\otimes\psi _1+\phi _1\otimes (-\psi _2)+\phi _2\otimes\psi _3-\phi _2\otimes\psi _4
\end{equation*}
Now $\sum c_{ij}=2\ne 0$, $c_{23}=1$, $\sum _jc_{2j}=0$, $\sum _ic_{i3}=1$ so \eqref{eq1} does not hold. It follows from Theorem~\ref{thm2} that $\psi '$ is entangled. Another way to see this is to write
\begin{equation*}
\psi '=\phi _1\otimes (\psi _1-\psi _2)+\phi _2\otimes (\psi _3-\psi _4)
\end{equation*}
We then have that $\sum c_{ij}=2\ne 0$, $c_{11}=1$, $\sum _jc_{1j}=\sum _ic_{i1}=1$ so \eqref{eq1} does not hold.\hfill\qedsymbol
\end{exam}

We now return to Example~1. We have that $c_{11}=4$, $c_{12}=-3i$, $c_{13}=5$, $c_{21}=-8$, $c_{22}=6i$, $c_{23}=-10$, $c_{31}=12$, $c_{32}=-9i$, $c_{33}=15$. We then obtain, $\sum _{i,j}c_{ij}=6(3-i)$,
$\sum _jc_{1j}=3(3-i)$, $\sum _jc_{2j}=6(i-3)$, $\sum _jc_{3j}=9(3-i)$, $\sum _ic_{i1}=8$,
$\sum _ic_{i2}=-6i$,
$\sum _ic_{i3}=10$. It is now easy to check that \eqref{eq1} holds for every $i,j$. Hence, by Theorem~\ref{thm2},
$\psi$ is factorized. Using our previous formulas, we have that, $a_1=1/2$, $a_2=-1$, $a_3=3/2$, $b_1=8$, $b_2=-6i$, $b_3=10$. Hence, $\psi =\alpha\otimes\beta$ where
\begin{equation*}
\alpha =\tfrac{1}{2}\,\phi _1-\phi _2+\tfrac{3}{2}\,\phi _3,\quad\beta =8\psi _1-6i\psi _2+10\psi _3
\end{equation*}
or slightly simpler, by Lemma~\ref{lem1}, we can let
\begin{equation*}
\alpha =\phi _1-2\phi _2+3\phi _3,\quad\beta =4\psi _1-3i\psi 2+5\psi _3
\end{equation*}

Recall that any $\psi\in H$, with $\psi\ne 0$, has a \textit{Schmidt decomposition}
\begin{equation}                
\label{eq3}
\psi =\sum _{i=1}^r\lambda _i\phi _i\otimes\psi _i
\end{equation}
where $\lambda _i>0$ and $\brac{\phi _i}$, $\brac{\psi _j}$ are orthonormal vectors in $H_1$ and $H_2$, respectively \cite{hz12,nc00}. It is well-known that $\psi$ is factorized if and only if $r=1$ \cite{hz12,nc00}.
If $r=1$, then clearly $\psi$ is factorized. The converse is not so obvious. We can employ Theorem~\ref{thm2} to prove this converse. The decomposition \eqref{eq3} gives an expansion with $c_{ij}=\lambda _i\delta _{ij}$. Then
$\sum _{i,j}c_{ij}=\sum\lambda _i\ne 0$, $\sum _jc_{ij}=\lambda _i$, $\sum _ic_{ij}=\lambda _j$. Then \eqref{eq1} becomes
\begin{equation}                
\label{eq4}
c_{ij}\sum _{i=1}^r\lambda _i=\lambda _i\lambda _j
\end{equation}
for all $i,j$. If $r=1$, then \eqref{eq4} becomes $\lambda _1^2=\lambda _1^2$ so $\psi$ is factorized. If
$r\ge 2$, letting $i=1$, $j=2$ we have for \eqref{eq4} that
\begin{equation*}
c_{12}\sum _{i=1}^r\lambda _i=\lambda _1\lambda _2
\end{equation*}
But this is impossible because $c_{12}=0$. Applying Theorem~\ref{thm2} shows that $\psi$ is entangled. The reader may wonder why we don't just use the Schmidt decomposition to test whether a general vector $\psi$ is factorized. One reason is that the Schmidt decomposition can be difficult to construct. Another reason is that our results generalize to multipartite systems where Schmidt decompositions are not available. This will be shown subsequently.

Although Theorem~\ref{thm2} requires that $\sum c_{ij}\ne 0$, this condition rarely holds and when it does, we can frequently alter one of the bases slightly so the condition does not hold. In fact, we already used this technique in Example~2. We now present a general result that is always valid. It is not as simple as Theorem~\ref{thm2} because now we have two conditions to verify. We denote the argument of complex number $c$ by $\arg (c)$ so that $c=\ab{c}e^{i\arg (c)}$. We assume that $0\le\arg (c)<2\pi$.

\begin{thm}    
\label{thm4}
If $\psi =\sum c_{ij}\phi _i\otimes\psi _j$, then $\psi$ is factorized if and only if for all $i,j$ we have
\begin{equation}                
\label{eq5}
\ab{c_{ij}}\sum _{i,j}\ab{c_{ij}}=\sum _j\ab{c_{ij}}\sum _i\ab{c_{ij}}
\end{equation}
and there exists a number $c\in\real$ such that for all $i,j$ we have
\begin{equation}                
\label{eq6}
\sum _j\arg (c_{ij})+\sum _i\arg (c_{ij})=\max (m,n)\arg (c_{ij})+c\pmod{2\pi}
\end{equation}
\end{thm}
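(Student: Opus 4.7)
The plan is to split the factorization condition $c_{ij}=a_ib_j$ into its modulus and phase parts. Writing $a_i=|a_i|e^{i\alpha_i}$ and $b_j=|b_j|e^{i\beta_j}$, the relation $c_{ij}=a_ib_j$ is equivalent to
\begin{equation*}
|c_{ij}|=|a_i|\,|b_j| \qquad\text{and}\qquad \arg(c_{ij})\equiv\alpha_i+\beta_j\pmod{2\pi}.
\end{equation*}
Condition \eqref{eq5} is the statement that the matrix $[|c_{ij}|]$ has multiplicative rank one, and \eqref{eq6} is meant to encode that the phase matrix $[\arg(c_{ij})]$ has additive rank one modulo $2\pi$. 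Everything in the proof will be organized around these two reductions.

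For necessity, assume $\psi=\alpha\otimes\beta$ with $\alpha=\sum a_i\phi_i$ and $\beta=\sum b_j\psi_j$. Then $|c_{ij}|=|a_i|\,|b_j|$, and since $\psi\ne 0$ forces $\sum|c_{ij}|\ne 0$, I would apply the ``only if'' half of Theorem \ref{thm2} to the nonnegative coefficients $\{|c_{ij}|\}$ to obtain \eqref{eq5}. For \eqref{eq6}, I would substitute $\arg(c_{ij})\equiv\alpha_i+\beta_j\pmod{2\pi}$ into the left side, compute
\begin{equation*}
\sum_j\arg(c_{ij})+\sum_i\arg(c_{ij})\equiv n\alpha_i+m\beta_j+\textstyle\sum_i\alpha_i+\sum_j\beta_j\pmod{2\pi},
\end{equation*}
and verify that this matches $\max(m,n)(\alpha_i+\beta_j)+c$ with $c$ a fixed combination of $\sum_i\alpha_i$ and $\sum_j\beta_j$.

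For sufficiency, assume \eqref{eq5} and \eqref{eq6}. Applying Theorem \ref{thm2} to the nonnegative coefficients $\{|c_{ij}|\}$ converts \eqref{eq5} into a factorization $|c_{ij}|=d_ie_j$ with $d_i,e_j\ge 0$. From \eqref{eq6} I would then extract real numbers $\alpha_i,\beta_j$ satisfying $\arg(c_{ij})\equiv\alpha_i+\beta_j\pmod{2\pi}$. Setting $a_i=d_ie^{i\alpha_i}$ and $b_j=e_je^{i\beta_j}$ gives $c_{ij}=|c_{ij}|e^{i\arg(c_{ij})}=a_ib_j$, so $\psi=\alpha\otimes\beta$.

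The main obstacle is the sufficiency side of the phase analysis: turning the single mod-$2\pi$ identity \eqref{eq6} into an honest additive rank-one decomposition of $[\arg(c_{ij})]$. The natural attempt is to define candidate phases $\alpha_i$ and $\beta_j$ as appropriate normalizations of $\sum_j\arg(c_{ij})$ and $\sum_i\arg(c_{ij})$, and then use \eqref{eq6} at varying pairs $(i,j)$ to show $\arg(c_{ij})\equiv\alpha_i+\beta_j\pmod{2\pi}$. A secondary subtlety is the interpretation of \eqref{eq6} at entries where $c_{ij}=0$: since $\arg(0)$ is undefined, one must either adopt a convention (e.g.\ $\arg(0)=0$) or argue separately that the zero pattern of $[c_{ij}]$ is itself of rank one, which is automatic once \eqref{eq5} gives $|c_{ij}|=d_ie_j$ because then $c_{ij}=0$ iff $d_i=0$ or $e_j=0$.
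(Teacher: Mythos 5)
Your outline follows the paper's strategy---split $c_{ij}=a_ib_j$ into the modulus identity $\ab{c_{ij}}=\ab{a_i}\,\ab{b_j}$ and the phase identity $\arg (c_{ij})\equiv\alpha _i+\beta _j\pmod{2\pi}$---but it stops exactly where the work lies, so as it stands there is a genuine gap. You yourself flag the sufficiency side of the phase analysis as the ``main obstacle'' and only describe ``the natural attempt''; that step is the entire content of the converse direction and is never carried out. The paper closes it by taking $d=\max (m,n)$, setting $\alpha _i=\tfrac{1}{d}\sum _j\arg (c_{ij})-\tfrac{c}{d}$ and $\beta _j=\tfrac{1}{d}\sum _i\arg (c_{ij})$, and dividing the congruence \eqref{eq6} by $d$. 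Your hesitation is well founded: dividing a congruence modulo $2\pi$ by $d$ only gives a congruence modulo $2\pi /d$, so the hidden integer multiples of $2\pi$ in \eqref{eq6} must be controlled, and ``using \eqref{eq6} at varying pairs $(i,j)$'' does not by itself do this. Indeed the obstacle is real: with $m=n=3$, $c_{ij}=\omega ^{s_{ij}}$, $\omega =e^{2\pi i/3}$, $s=\bigl(\begin{smallmatrix}0&0&0\\0&1&2\\0&2&1\end{smallmatrix}\bigr)$, all moduli equal $1$ so \eqref{eq5} holds, every row and column sum of the arguments is $0$ or $2\pi$, so \eqref{eq6} holds with $c=0$; yet $c_{11}c_{22}\ne c_{12}c_{21}$, so this $\psi$ is entangled. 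So the deferred step cannot be filled in as sketched (and the paper's own treatment of it is exactly where its argument is shakiest).

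There is a second gap on the necessity side. Substituting $\arg (c_{ij})\equiv\alpha _i+\beta _j$ gives $n\alpha _i+m\beta _j$ plus a constant, which matches $\max (m,n)\arg (c_{ij})+c$ only when $m=n$: if, say, $n>m$, the leftover term $(n-m)\beta _j$ depends on $j$ and cannot be absorbed into $c$. The paper's first move, which you omit, is precisely the reduction to $m=n=d$ by enlarging the smaller space and padding with zero coefficients; without that reduction your verification of \eqref{eq6} does not go through. Finally, your remark about $\arg (0)$ does not dispose of that issue: knowing the zero pattern of $\sqbrac{c_{ij}}$ is rank one does not make $\arg (a_ib_j)=\arg (a_i)+\arg (b_j)$ hold at entries with $a_i=0$ or $b_j=0$ under the convention $\arg (0)=0$, and the padded rows or columns consist entirely of such entries, which genuinely enter the sums in \eqref{eq6}; they need explicit handling (a point on which the paper is silent as well).
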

\begin{proof}
We can assume without loss of generality that $m=n=d$ where $m=\dim H_1$, $n=\dim H_2$. Indeed, if $m<n$ say, then we can enlarge $H_1$ to a Hilbert space with dimension $n$. We can then consider $\psi$ of the form
$\psi =\sum _{i,j=1}^dc_{ij}\phi _i\otimes\psi _j$ where $c_{ij}=0$ if $i>m$. In this case, $\max (m,n)=d$. As in the proof of Theorem~\ref{thm2}, $\psi$ is factorized if and only if there exist sequences $\brac{a_i}$, $\brac{b_j}$, $i,j=1,\ldots ,d$ such that $c_{ij}=a_ib_j$. If $c_{ij}=a_ib_j$, then
\begin{equation*}
\sum _j\ab{c_{ij}}=\ab{a_i}\sum _j\ab{b_j},\quad\sum _i\ab{c_{ij}}=\ab{b_j}\sum _i\ab{a_i}
\end{equation*}
Hence,
\begin{equation*}
\sum _j\ab{c_{ij}}\sum _i\ab{c_{ij}}=\ab{a_i}\,\ab{b_j}\sum _i\ab{a_i}\sum _j\ab{b_j}=\ab{c_{ij}}\sum _{i,j}\ab{c_{ij}}
\end{equation*}
so \eqref{eq5} holds. We also have that
\begin{equation*}
\arg (c_{ij})=\arg (a_ib_j)=\arg (a_i)+\arg (b_j)\pmod{2\pi}
\end{equation*}
Hence, letting $c=\sum\arg (a_i)+\sum\arg (b_j)$ we have
\begin{align*}
\sum _j\arg (c_{ij})+\sum _i\arg (c_{ij})&=d\sqbrac{\arg (a_i)+\arg (b_i)}+c\\
   &=d\arg (c_{ij})+c\pmod{2\pi}
\end{align*}
Hence, \eqref{eq6} holds. Conversely, suppose that \eqref{eq5} and \eqref{eq6} hold. Letting
$a=\sum _{i,j}\ab{c_{ij}}>0$, $\ab{a_i}=\tfrac{1}{a}\,\sum _j\ab{c_{ij}}$, $\ab{b_j}=\sum _i\ab{c_{ij}}$ we obtain
$\ab{c_{ij}}=\ab{a_i}\,\ab{b_j}$. Moreover, letting 
\begin{equation*}
\alpha _i=\frac{1}{d}\,\sum _j\arg (c_{ij})-\frac{c}{d},\quad\beta _j=\frac{1}{d}\,\sum _i\arg (c_{ij})
\end{equation*}
we have that
\begin{align*}
\alpha _i+\beta _j&=\frac{1}{d}\,\sqbrac{\sum _j\arg (c_{ij})+\sum _i\arg (c_{ij})}-\frac{c}{d}
   =\arg (c_{ij})+\frac{c}{d}-\frac{c}{d}\\
   &=\arg (c_{ij})\pmod{2\pi}
\end{align*}
Defining $a_i=\ab{a_i}e^{i\alpha _i}$ and $b_j=\ab{b_j}e^{i\beta _j}$ we obtain
\begin{align*}
c_{ij}&=\ab{c_{ij}}e^{i\arg (c_{ij})}=\ab{a_i}\,\ab{b_j}e^{i(\alpha _i+\beta _j)}\\
   &=\ab{a_i}e^{i\alpha _i}\ab{b_j}e^{i\beta _j}=a_ib_j
\end{align*}
Hence, $\psi$ is factorized.
\end{proof}

Until now we have only considered bipartite systems. We now briefly discuss a multipartite system
$H=H_1\otimes H_2\otimes\cdots\otimes H_r$. A vecter $\psi\in H$ is \textit{factorized} if it has the form
$\psi =\alpha _1\otimes\alpha _2\otimes\cdots\otimes\alpha _r$, $\alpha _i\in H_i$, $i=1,\ldots ,r$. Otherwise,
$\psi$ is \textit{entangled}. Let $\phi _{j_i}^i$, $i=1,\ldots ,r$, $j_i=1,\ldots ,d_i$ be orthonormal bases for $H_i$. We then have that
\begin{equation}                
\label{eq7}
\psi =\sum c_{j_1j_2\ldots j_r}\phi _{j_1}^1\otimes\phi _{j_2}^2\otimes\cdots\otimes\phi _{j_r}^r
\end{equation}
We now generalize Theorem~2 to the multipartite case and leave the similar generalization of Theorem~\ref{thm4} to the reader.

\begin{thm}    
\label{thm5}
If $\psi$ has the form \eqref{eq7} and $\sum c_{j_1j_2\ldots j_r}\ne 0$, then $\psi$ is factorized if and only if for all $j_1,\ldots ,j_r$ we have
\begin{equation}                
\label{eq8}
c_{j_1\ldots ,j_r}\paren{\sum c_{j_1j_2\ldots j_r}}^{r-1}
   =\paren{\sum _{j_2,\ldots ,j_r}c_{j_1j_2\ldots j_r}}\cdots\paren{\sum _{j_1,\ldots ,j_{r-1}}c_{j_1j_2\ldots j_r}}
\end{equation}
\end{thm}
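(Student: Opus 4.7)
The plan is to directly generalize the proof of Theorem~\ref{thm2} from two tensor factors to $r$. As in the bipartite case, $\psi$ is factorized if and only if it equals $\alpha_1\otimes\cdots\otimes\alpha_r$ for some $\alpha_k=\sum_{j_k}a^k_{j_k}\phi^k_{j_k}\in H_k$, which by multilinearity of the tensor product is equivalent to $c_{j_1\ldots j_r}=a^1_{j_1}a^2_{j_2}\cdots a^r_{j_r}$ for every multi-index. So the theorem reduces to showing that the existence of such a product decomposition of the scalars $c_{j_1\ldots j_r}$ is equivalent to \eqref{eq8}.

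For necessity, assume $c_{j_1\ldots j_r}=\prod_k a^k_{j_k}$ and set $S_i=\sum_{j_i}a^i_{j_i}$, so that $\sum c_{j_1\ldots j_r}=\prod_i S_i$. The partial sum in which only $j_k$ is kept free equals $a^k_{j_k}\prod_{i\ne k}S_i$. Multiplying the $r$ such partial sums together, each $S_i$ appears once for every $k\ne i$, that is, exactly $r-1$ times, so the product equals $\paren{\prod_k a^k_{j_k}}\prod_i S_i^{r-1}=c_{j_1\ldots j_r}\paren{\sum c_{j_1\ldots j_r}}^{r-1}$, which is \eqref{eq8}.

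For sufficiency, suppose \eqref{eq8} holds and let $c=\sum c_{j_1\ldots j_r}\ne 0$. Imitating the asymmetric normalization used in Theorem~\ref{thm2}, I would absorb all $r-1$ powers of $c$ into a single coefficient sequence: put $a^1_{j_1}=c^{-(r-1)}\sum_{j_2,\ldots,j_r}c_{j_1\ldots j_r}$, and for $k\ge 2$ let $a^k_{j_k}$ be the sum of $c_{j_1\ldots j_r}$ over all indices other than $j_k$. Then $\prod_k a^k_{j_k}$ equals $c^{-(r-1)}$ times the product of the $r$ partial sums, which by \eqref{eq8} is $c^{-(r-1)}\cdot c_{j_1\ldots j_r}\cdot c^{r-1}=c_{j_1\ldots j_r}$. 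Setting $\alpha_k=\sum_{j_k}a^k_{j_k}\phi^k_{j_k}$ then yields the desired factorization.

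The only real obstacle is the combinatorial bookkeeping in the necessity direction---keeping track that, in the product of the $r$ partial sums, every total $S_i$ appears with multiplicity $r-1$, which is what produces the exponent $r-1$ on the right-hand side of \eqref{eq8}. The $c^{-(r-1)}$ normalization used in the converse is then the natural analog of placing $1/c$ on just one of the two factors in the bipartite proof, and no new conceptual content is required beyond that of Theorem~\ref{thm2}.
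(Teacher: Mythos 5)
Your proposal is correct and follows essentially the same route as the paper: reduce factorization of $\psi$ to the existence of a product decomposition $c_{j_1\ldots j_r}=a^1_{j_1}\cdots a^r_{j_r}$, verify \eqref{eq8} directly from such a decomposition (with each total sum appearing $r-1$ times in the product of partial sums), and conversely define the $a^k_{j_k}$ as the partial sums with the factor $\paren{\sum c_{j_1\ldots j_r}}^{-(r-1)}$ absorbed into a single sequence, exactly as in the paper's proof.
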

\begin{proof}
As in the proof of Theorem~\ref{thm2}, $\psi$ is factorized if and only if there exist sequences of complex numbers $\brac{a_{j_1}^1},\cdots ,\brac{a_{j_r}^r}$ such that
\begin{equation}                
\label{eq9}
c_{j_1j_2\ldots j_r}=a_{j_1}^1a_{j_2}^2\cdots a_{j_r}^r
\end{equation}
If \eqref{eq9} holds, then the right hand side of \eqref{eq8} becomes
\begin{align*}
&\paren{a_{j_1}^1\sum _{j_2,\ldots ,j_r}a_{j_2}^2\cdots a_{j_r}^r}
   \cdots\paren{a_{j_r}^r\sum _{j_1,\ldots ,j_{r-1}}a_{j_1}^1\cdots a_{j_r}^{r-1}}\\
   &\quad\quad(a_{j_1}^1\cdots a_{j_r}^r)\paren{\sum a_{j_1}^1\sum a_{j_2}^2\cdots\sum a_{j_r}^r}^{r-1}
   =c_{j_1\ldots j_r}\paren{\sum c_{j_1\ldots j_r}}^{r-1}
\end{align*}
so \eqref{eq8} holds. Conversely, suppose \eqref{eq8} holds. Letting
\begin{align*}
a_{j_1}^1&=\sum _{j_2,\ldots ,j_r}c_{j_1\ldots j_r}/\paren{\sum c_{j_1\ldots j_r}}^{r-1}\\
   \vdots&\\
   a_{j_r}^r&=\sum _{j_1,\ldots ,j_{r-1}}c_{j_1\ldots j_r}
\end{align*}
we have that \eqref{eq9} holds.
\end{proof}

\end{document}